\newtheorem{remark}[theorem]{Remark}
\newtheorem{assumption}[theorem]{Assumption}
\title{Causal Rate Distortion Function and Relations to Filtering Theory}
\author{Photios A. Stavrou\thanks{Ph.D. student at ECE Department, University of Cyprus, Green Park, Aglantzias 91,
P.O. Box 20537, 1687, Nicosia, Cyprus ({\tt photios.stavrou@ucy.ac.cy}).}
        \and Charalambos D. Charalambous\thanks{Professor at ECE Department, University of Cyprus, Green Park, Aglantzias 91, P.O. Box 20537, 1687, Nicosia, Cyprus ({\tt chadcha@ucy.ac.cy}).}}
\begin{document}

\maketitle

\begin{abstract}
A causal rate distortion function (RDF) is defined, existence of extremum solution is described via weak$^*$-convergence, and its relation to filtering theory is discussed. The relation to filtering is obtained via a causal constraint imposed on the reconstruction kernel to be realizable while the extremum solution is given for the stationary case.
\end{abstract}

\begin{keywords}
causal rate distortion function (RDF), realization, causal filter, weak$^*$-convergence, optimal reconstruction kernel
\end{keywords}

\begin{AMS}
28A33, 	46E15, 60G07
\end{AMS}

\pagestyle{myheadings}
\thispagestyle{plain}
\markboth{P. A. STAVROU AND C. D. CHARALAMBOUS}{CAUSAL RATE DISTORTION FUNCTION AND FILTERING THEORY}

\section{Introduction}\label{introduction} Shannon's information theory for reliable communication evo-lved over the years without much emphasis on real-time realizability or causality imposed on the communication sub-systems. In particular, the classical rate distortion function (RDF) for source data compression deals with the characterization of the optimal reconstruction conditional  distribution subject to a fidelity criterion \cite{berger}, without regard for realizability. Hence, coding schemes which achieve the RDF are not realizable.\\
On the other hand, filtering theory is developed by imposing real-time realizability on estimators with respect to measurement data.  Although, both reliable communication and filtering (state estimation for control) are concerned with reconstruction of processes, the main underlying assumptions characterizing them are different.
\par In this paper, the intersection of rate distortion function (RDF) and realizable filtering theory is established by invoking the additional assumption that the reconstruction kernel is realizable via causal operations, while the optimal causal reconstruction kernel is derived. Consequently, the connection between causal RDF, its characterization via the optimal reconstruction kernel, and realizable filtering theory are established under very general conditions on the source (including Markov sources). The fundamental advantage of the new filtering approach based on causal RDF, is the ability to ensure average or probabilistic bounds on the estimation error, which is a non-trivial task when dealing with Bayesian filtering techniques.
\par The first relation between information theory and filtering via distortion rate function is discussed by R. S. Bucy in \cite{bucy}, by carrying out the computation of a realizable distortion rate function with square criteria for two samples of the Ornstein-Uhlenbeck process. The earlier work of A. K. Gorbunov and M. S. Pinsker \cite{gorbunov91} on $\epsilon$-entropy defined via a causal constraint on the reproduction distribution of the RDF, although not directly related to the realizability question pursued by  Bucy, computes the causal RDF for stationary Gaussian processes via power spectral densities. The realizability constraints imposed on the reproduction conditional distribution in \cite{bucy} and \cite{gorbunov91} are different. The actual computation of the distortion rate or RDF in these works is based on the Gaussianity of the process, while no general theory is developed to handle arbitrary processes.\\
The main results described are the following.
\begin{enumerate}
\item[1)] Existence of the causal RDF using the topology of weak$^*$-convergence.
\item[2)] Closed form expression of the optimal reconstruction conditional distribution for stationary processes, which is realizable via causal operations.
\item[3)] Realization procedure of the filter based on the causal RDF.
%\item[3)] Examples demonstrate the realization of the filter.
\end{enumerate}
Next,we give a high level discussion on Bayesian filtering theory and we present some aspects of the problem and results pursued in this paper.
Consider a discrete-time process $X^n\triangleq\{X_0,X_1,\ldots,X_n\}\in{\cal X}_{0,n} \triangleq \times_{i=0}^n{\cal X}_i$, and its reconstruction $Y^n\triangleq\{Y_0,Y_1,\ldots,Y_n\}\in{\cal Y}_{0,n} \triangleq \times_{i=0}^n{\cal Y}_i$, where ${\cal X}_i$ and ${\cal Y}_i$ are Polish spaces (complete separable metric spaces). The objective is to reconstruct $X^n$ by $Y^n$ causally subject to a distortion or fidelity criterion.
\par In classical filtering, one is given a mathematical model that generates the process $X^n$, $\{P_{X_i|X^{i-1}}(dx_i|x^{i-1}):i=0,1,\ldots,n\}$ often induced via discrete-time recursive dynamics, a mathematical model that generates observed data obtained from sensors, say, $Z^n$, $\{P_{Z_i|Z^{i-1},X^i}$ $(dz_i|z^{i-1},x^i):i=0,1,\ldots,n\}$ while $Y^n$ are the causal estimates of some function of the process $X^n$ based on the observed data $Z^n$. Thus, in classical filtering theory both models which generate the unobserved and observed processes, $X^n$ and $Z^n$, respectively, are given \'a priori. Fig.~\ref{filtering} illustrates the cascade block diagram of the filtering problem.	
\begin{figure}[ht]
\centering
\includegraphics[scale=0.60]{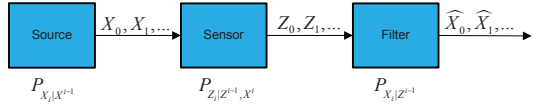}
\caption{Block Diagram of Filtering Problem}
\label{filtering}
\end{figure}

\par In causal rate distortion theory one is given the process $X^n$, which induces $\{P_{X_i|X^{i-1}}(dx_i|x^{i-1}):~i=0,1,\ldots,n\}$, and determines the causal reconstruction conditional distribution $\{P_{Y_i|Y^{i-1},X^i}(dy_i|y^{i-1},x^i):~i=0,1,\ldots,n\}$ which minimizes the mutual information between $X^n$ and $Y^n$ subject to a distortion or fidelity constraint, via a causal (realizability) constraint. The filter $\{Y_i:~i=0,1,\ldots,n\}$ of $\{X_i:~i=0,1,\ldots,n\}$ is found by realizing the reconstruction distribution $\{P_{Y_i|Y^{i-1},X^i}(dy_i|y^{i-1}$,\\
$x^i):~i=0,1,\ldots,n\}$ via a cascade of sub-systems as shown in Fig.~\ref{filtering_and_causal}.
\begin{figure}[ht]
\centering
\includegraphics[scale=0.60]{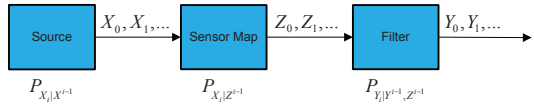}
\caption{Block Diagram of Filtering via Causal Rate Distortion Function}
\label{filtering_and_causal}
\end{figure}

The distortion function or fidelity constraint between  $x^n$ and its reconstruction $y^n$, is a measurable function defined by
 \begin{eqnarray}
 d_{0,n} : {\cal X}_{0,n} \times {\cal Y}_{0,n} \mapsto [0, \infty], \: \: d_{0,n}(x^n,y^n)\triangleq\sum^n_{i=0}\rho_{0,i}(x^i,y^i)\nonumber
 \end{eqnarray}
The mutual information between $X^n$ and $Y^n$, for a given distribution ${P}_{X^n}(dx^n)$, and  conditional distribution $P_{Y^n|X^n}(dy^n|x^n)$, is defined by
\begin{equation}
{I}(X^n;Y^n)\triangleq \int_{{\cal X}_{0,n}\times{\cal Y}_{0,n}}\log\Big(\frac{P_{Y^n|X^n}(dy^n|x^n)}{{P}_{Y^n}(dy^n)}\Big)
P_{Y^n|X^n}(dy^n|x^n)\otimes{P}_{X^n}(dx^n) \label{1}
\end{equation}
Define the $(n+1)-$fold causal convolution measure
\begin{equation}
{\overrightarrow P}_{Y^n|X^n}(dy^n|x^n) \triangleq \otimes^n_{i=0}P_{Y_i|Y^{i-1},X^i}(dy_i|y^{i-1},x^i)-a.s. \label{9}
\end{equation}
The realizability constraint for a causal filter is defined by
\begin{eqnarray}
{\overrightarrow Q}_{ad} \triangleq\Big\{ P_{Y^n|X^n}(dy^n|x^n) :~P_{Y^n|X^n}(dy^n|x^n) ={\overrightarrow P}_{Y^n|X^n}(dy^n|x^n)-a.s. \Big\}  \label{eq18}
\end{eqnarray}
The realizability condition (\ref{eq18}) is necessary, otherwise the connection between filtering and realizable rate distortion theory cannot be established. This is due to the fact that $P_{Y^n|X^n}(dy^n|x^n)=\otimes_{i=0}^n{P}_{Y_i|Y^{i-1},X^n}(dy_i|y^{i-1},x^n)-a.s.$, and hence in general, for each $i=0,1, \ldots,n$, the conditional distribution of $Y_i$ depends on future symbols $\{X_{i+1},X_{i+2},\ldots,X_n\}$ in addition to the past and present symbols $\{Y^{i-1},X^i\}$.\\
{\it Causal Rate Distortion Function.}
The causal RDF is defined by
\begin{equation}
{R}^c_{0,n}(D)\triangleq \inf_{{ P}_{Y^n|X^n}(dy^n|x^n)\in  \overrightarrow{Q}_{ad}:~{E}\big\{d_{0,n}(X^n,Y^n)\leq{D}\big\}}I(X^n; Y^n)\label{7}
\end{equation}
Note that realizability condition (\ref{eq18}) is different from the realizability condition  in \cite{bucy}, which is defined under the assumption that $Y_i$ is independent of $X_{j|i}^*\triangleq X_j -\mathbb{E}\Big(X_j|X^i\Big), j=i+1, i+2, \ldots,$. The claim here is that realizability condition (\ref{eq18}) is more natural and applies to processes which are not necessarily Gaussian having square error distortion function. Realizability condition (\ref{eq18}) is weaker than the causality condition in \cite{gorbunov91} defined by  $X_{n+1}^\infty \leftrightarrow X^n \leftrightarrow Y^n$ forms a Markov chain.\\
 The point to be made regarding  (\ref{7}) is that  (see also Lemma~\ref{lem1}):
\begin{eqnarray}
&{P}_{Y^n|X^n}(dy^n|x^n)= {\overrightarrow P}_{Y^n|X^n}(dy^n|x^n)-a.s.
{\Longleftrightarrow} \nonumber\\
&I(X^n;Y^n)=\int \log\Big(\frac{{\overrightarrow P}_{Y^n|X^n}(dy^n|x^n)}{{P}_{Y^n}(dy^n)}\Big){\overrightarrow P}_{Y^n|X^n}(dy^n|x^n){P}_{X^n}(dx^n)
\equiv{\mathbb I}(P_{X^n},{\overrightarrow P}_{Y^n|X^n})\label{eq6}
\end{eqnarray}
where ${\mathbb I}(P_{X^n},{\overrightarrow P}_{Y^n|X^n})$ points out the functional dependence of $I(X^n;{Y^n})$ on $\{P_{X^n}$,\\
${\overrightarrow P}_{Y^n|X^n}\}$.\\
The paper is organized as follows. Section~\ref{problem_formulation} discusses the formulation on abstract spaces. Section~\ref{existence} establishes  existence of optimal minimizing kernel, and Section~\ref{necessary} derives the stationary solution. Section~\ref{realization1} describes the realization of causal RDF. Throughout the manuscript proofs are omitted due to space limitation.

\section{Problem Formulation}\label{problem_formulation}

Let $\mathbb{N}^n\triangleq\{0,1,\ldots,n\}$, $n \in \mathbb{N} \triangleq \{0,1,2,\ldots\}$. The source and reconstruction alphabets, respectively, are sequences of Polish spaces $\{ {\cal X}_t: t\in\mathbb{N}\}$ and $\{ {\cal Y}_t: t\in\mathbb{N}\}$, associated with their corresponding measurable spaces $({\cal X}_t,{\cal B}({\cal X}_t))$ and $({\cal Y}_t, {\cal B}({\cal Y}_t))$, $t\in\mathbb{N}$. Sequences of alphabets are  identified
with the product spaces $({\cal X}_{0,n},{\cal B}({\cal X}_{0,n})) \triangleq {\times}_{k=0}^{n}({\cal X}_k,{\cal B}({\cal X}_k))$,
and $({\cal Y}_{0,n},{\cal B}({\cal Y}_{0,n}))\triangleq \times_{k=0}^{n}({\cal Y}_k,{\cal B}({\cal Y}_k))$.
The source and reconstruction are processes denoted by $X^n \triangleq \{X_t: t\in\mathbb{N}^n\}$, $X:\mathbb{N}^{n}\times\Omega\mapsto {\cal X}_t$, and by $Y^n\triangleq \{Y_t: t\in\mathbb{N}^n\}$, $Y:\mathbb{N}^{n}\times\Omega\mapsto  {\cal Y}_t$, respectively. Probability measures on any measurable space $( {\cal Z}, {\cal B}({\cal Z}))$ are denoted by ${\cal M}_1({\cal Z})$. It is assumed  that the $\sigma$-algebras $\sigma\{X^{-1}\}=\sigma\{Y^{-1}\}=\{\emptyset,\Omega\}$.

\begin{definition}\label{stochastic kernel}
Let $({\cal X}, {\cal B}({\cal X})), ({\cal Y}, {\cal B}({\cal Y}))$ be measurable spaces in which $\cal Y$ is a Polish Space. A  stochastic kernel on $\cal Y$ given $\cal X$ is a mapping $q: {\cal B}({\cal Y}) \times {\cal X}  \rightarrow [0,1]$ satisfying the following two properties:
\par 1) For every $x \in {\cal X}$, the set function $q(\cdot;x)$ is a probability measure (possibly finitely additive) on ${\cal B}({\cal Y}).$
\par 2) For every $F \in {\cal B}({\cal Y})$, the function $q(F;\cdot)$ is ${\cal B}({\cal X})$-measurable.\\
 The set of all such stochastic Kernels is denoted by ${\cal Q}({\cal Y};{\cal X})$.
\end{definition}
\begin{definition}\label{comprchan}
Given measurable spaces $({\cal X}_{0,n},{\cal B}({\cal X}_{0,n}))$, $({\cal Y}_{0,n},{\cal B}({\cal Y}_{0,n}))$, then
\par 1) {\it A Non-Causal Data Compression Channel} is a  stochastic kernel $ q_{0,n} (dy^n; x^n) \in {\cal Q}({\cal Y}_{0,n} ;{\cal X}_{0,n})$ which admits a factorization into a non-causal sequence
\begin{eqnarray}
q_{0,n}(dy^n; x^n)=\otimes_{i=0}^n q_i(dy_i;y^{i-1},x^n)\nonumber
\end{eqnarray}
where  $q_i(dy_i;y^{i-1},x^n) \in {\cal Q}({\cal Y}_i;{\cal Y}_{0,i-1}\times{\cal X}_{0,n}), i=0,\ldots,n,~n \in \mathbb{N}$.
\par 2) {\it A Causally Restricted Data Compression Channel} is a stochastic kernel $q_{0,n} (dy^n$\\
$;x^n) \in {\cal Q}({\cal Y}_{0,n} ;{\cal X}_{0,n})$ which admits a factorization into a causal sequence
\begin{eqnarray}
q_{0,n}(dy^n; x^n)=\otimes_{i=0}^n q_i(dy_i;y^{i-1},x^i)-a.s.,\nonumber
\end{eqnarray}
where $q_i \in {\cal Q}({\cal Y}_i;{\cal Y}_{0,i-1}\times{\cal X}_{0,i}), i=0,\ldots,n,~n \in \mathbb{N}$.
\end{definition}
\subsection{Causal Rate Distortion Function}
In this subsection the causal RDF is defined.
Given a source probability measure ${\cal \mu}_{0,n} \in {\cal M}_1({\cal X}_{0, n})$ (possibly finite additive) and a reconstruction Kernel $q_{0,n} \in {\cal Q}({\cal Y}_{0, n};{\cal X}_{0, n})$, one can define three probability measures as follows.
\par (P1): The joint measure $P_{0,n} \in {\cal M}_1({\cal Y}_{0,n}\times {\cal X}_{0, n})$:
\begin{eqnarray}
P_{0,n}(G_{0,n})&\triangleq&(\mu_{0,n} \otimes q_{0,n})(G_{0,n}),\:G_{0,n} \in {\cal B}({\cal X}_{0,n})\times{\cal B}({\cal Y}_{0,n})\nonumber\\
&=&\int_{{\cal X}_{0,n}} q_{0,n}(G_{0,n,x^n};x^n) \mu_{0,n}(d{x^n})\nonumber
\end{eqnarray}
where $G_{0,n,x^n}$ is the $x^n-$section of $G_{0,n}$ at point ${x^n}$ defined by $G_{0,n,x^n}\triangleq \{y^n \in {\cal Y}_{0,n}: (x^n, y^n) \in G_{0,n}\}$ and $\otimes$ denotes the convolution.
\par (P2): The marginal measure $\nu_{0,n} \in {\cal M}_1({\cal Y}_{0,n})$:% corresponding to the kernel $q_{0,n} \in {\cal Q}({\cal Y}_{0, n};{\cal X}_{0, n})$:
\begin{eqnarray}
\nu_{0,n}(F_{0,n})&\triangleq& P_{0,n}({\cal X}_{0, n} \times F_{0,n}),~F_{0,n} \in {\cal B}({\cal Y}_{0,n})\nonumber\\
&=&\int_{{\cal X}_{0, n}} q_{0,n}(({\cal X}_{0, n}\times F_{0,n})_{{x}^{n}};{x}^{n}) \mu_{0,n}(d{x^n})=\int_{{\cal X}_{0, n}} q_{0,n}(F_{0,n};x^n) \mu_{0,n}(dx^n)\nonumber
\end{eqnarray}
\par(P3): The product measure  $\pi_{0,n}:{\cal B}({\cal X}_{0,n}) \times
{\cal B}({\cal Y}_{0,n}) \mapsto [0,1] $ of $\mu_{0,n}\in{\cal M}_1({\cal X}_{0, n})$ and $\nu_{0,n}\in{\cal M}_1({\cal Y}_{0, n})$ for $G_{0,n} \in {\cal B}({\cal X}_{0,n}) \times {\cal B}({\cal Y}_{0,n})$:
\begin{eqnarray}
\pi_{0,n}(G_{0,n})\triangleq(\mu_{0,n} \times \nu_{0,n})(G_{0,n})=\int_{{\cal X}_{0, n}} \nu_{0,n}(G_{0,n,x^n}) \mu_{0,n}(dx^n)\nonumber
\end{eqnarray}
The precise definition of mutual information between two sequences of Random Variables $X^n$ and $Y^n$, denoted $I(X^n; Y^n)$ is defined via the Kullback-Leibler distance (or relative entropy) between the joint probability distribution of  $(X^n, Y^n)$ and the product of its marginal probability distributions of $X^n$ and $Y^n$, using the Radon-Nikodym derivative. Hence, by the chain rule of relative entropy:
\begin{eqnarray}
I(X^n;Y^n) &\triangleq&  \mathbb{D}(P_{0,n}|| \pi_{0,n})
=\int_{{\cal X}_{0,n} \times {\cal Y}_{0,n}}\log \Big( \frac{d  (\mu_{0,n} \otimes q_{0,n}) }{d ( \mu_{0,n} \times \nu_{0,n} ) }\Big) d(\mu_{0,n} \otimes q_{0,n}) \nonumber\\
& =& \int_{{\cal X}_{0,n} \times {\cal Y}_{0,n}} \log \Big( \frac{q_{0,n}(d y^n; x^n)}{  \nu_{0,n} (dy^n)   } \Big)q_{0,n}(dy^n;dx^n)\otimes\mu_{0,n}(dx^n) \nonumber\\
&=&\int_{{\cal X}_{0,n}} \mathbb{D}(q_{0,n}(\cdot;x^n)|| \nu_{0,n}(\cdot)) \mu_{0,n}(dx^n)\equiv \mathbb{I}(\mu_{0,n}, q_{0,n})  \label{re3}
\end{eqnarray}

\par The next lemma relates causal product reconstruction kernels and conditional independence.
\begin{lemma} \label{lem1}
The following are equivalent for each $n\in\mathbb{N}$.
\begin{enumerate}
\item[1)] $q_{0,n} (dy^n; x^n)={\overrightarrow q}_{0,n}(dy^n;x^n)$-a.s., defined in Definition \ref{comprchan}-2).

\item[2)] For each $i=0,1,\ldots, n-1$,  $Y_i \leftrightarrow (X^i, Y^{i-1}) \leftrightarrow (X_{i+1}, X_{i+2}, \ldots, X_n)$, forms a Markov chain.

%\item[3)] $I(X^n ; Y^n)=I(X^n{\rightarrow}Y^n)$.

%\item[4)] $I(X^n\leftarrow Y^n)=0$.

\item[3)] For each  $i=0,1,\ldots, n-1$, $Y^i \leftrightarrow X^i \leftrightarrow X_{i+1}$ forms a Markov chain.
\end{enumerate}
\end{lemma}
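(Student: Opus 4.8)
The plan is to establish $1)\Leftrightarrow 2)$ by disintegration together with the essential uniqueness of regular conditional distributions (available since all alphabets are Polish, hence standard Borel), and to establish $2)\Leftrightarrow 3)$ purely from the calculus of conditional independence, i.e.\ the semi-graphoid properties: symmetry, decomposition, weak union and contraction, all of which are valid for random variables valued in standard Borel spaces. All kernel identities below are meant $\mu_{0,n}$-almost surely; since only the finitely many disintegrations $P_{Y_i|Y^{i-1},X^n}$, $i=0,\ldots,n$, intervene, the attendant null sets can be absorbed into a single one and cause no difficulty.

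For $1)\Leftrightarrow 2)$: regardless of any causality hypothesis, the chain rule for regular conditional probabilities yields, $\mu_{0,n}$-a.s.,
\[
q_{0,n}(dy^n;x^n)=\otimes_{i=0}^n P_{Y_i|Y^{i-1},X^n}(dy_i|y^{i-1},x^n).
\]
Condition $1)$ asserts that this product coincides a.s.\ with $\otimes_{i=0}^n q_i(dy_i;y^{i-1},x^i)$ for some $q_i\in{\cal Q}({\cal Y}_i;{\cal Y}_{0,i-1}\times{\cal X}_{0,i})$; integrating out $y_{i+1},\ldots,y_n$ (each factor being a probability measure) and using the essential uniqueness of the successive disintegrations forces, for each $i$, $P_{Y_i|Y^{i-1},X^n}(dy_i|y^{i-1},x^n)=q_i(dy_i;y^{i-1},x^i)$ a.s., i.e.\ this conditional law has a version not depending on $(x_{i+1},\ldots,x_n)$. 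That is precisely the Markov chain $Y_i\leftrightarrow(X^i,Y^{i-1})\leftrightarrow(X_{i+1},\ldots,X_n)$ of $2)$. Conversely, $2)$ says $P_{Y_i|Y^{i-1},X^n}=P_{Y_i|Y^{i-1},X^i}$ a.s.; substituting into the chain-rule factorization and setting $q_i:=P_{Y_i|Y^{i-1},X^i}$ gives $q_{0,n}={\overrightarrow q}_{0,n}$, which is $1)$.

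For $2)\Leftrightarrow 3)$ I would interpose the auxiliary statement $3')$: for each $i$, $Y^i\leftrightarrow X^i\leftrightarrow(X_{i+1},\ldots,X_n)$. Then $3')\Rightarrow 3)$ is immediate by decomposition, while $3)\Rightarrow 3')$ follows by fixing $i$ and inducting on $k$ to produce $Y^i\leftrightarrow X^i\leftrightarrow(X_{i+1},\ldots,X_{i+k})$: decomposition applied to $3)$ at index $i+k$ gives $Y^i\leftrightarrow X^{i+k}\leftrightarrow X_{i+k+1}$, and contraction with the inductive hypothesis closes the step. For $3')\Rightarrow 2)$, write $Y^i=(Y^{i-1},Y_i)$ and apply symmetry and weak union to $3')$ at index $i$ to move $Y^{i-1}$ into the conditioning, giving $Y_i\leftrightarrow(Y^{i-1},X^i)\leftrightarrow(X_{i+1},\ldots,X_n)$. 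For $2)\Rightarrow 3')$, induct on $i$: the case $i=0$ is a tautology because $\sigma\{Y^{-1}\}=\{\emptyset,\Omega\}$; for the step, weak union on $3')$ at $i-1$ gives $Y^{i-1}\leftrightarrow X^i\leftrightarrow(X_{i+1},\ldots,X_n)$, and contraction of this with $2)$ at index $i$ produces $Y^i\leftrightarrow X^i\leftrightarrow(X_{i+1},\ldots,X_n)$. Chaining the implications gives $2)\Leftrightarrow 3')\Leftrightarrow 3)$.

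The only genuine obstacle I anticipate is measure-theoretic bookkeeping rather than combinatorics: one must be sure the disintegrations exist and that their essential uniqueness is invoked legitimately (this is exactly where the Polish assumption is used), and that the contraction property of conditional independence is applied to fixed versions of the conditional laws. Contraction, unlike the intersection property, holds for standard Borel spaces with no positivity or density requirement, so no additional hypotheses on the source are needed; the remaining manipulations (symmetry, decomposition, weak union) are routine.
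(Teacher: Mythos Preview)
Your argument is correct. The paper itself omits the proof of this lemma (``proofs are omitted due to space limitation''), so there is nothing to compare against; your disintegration argument for $1)\Leftrightarrow 2)$ and the semi-graphoid manipulation for $2)\Leftrightarrow 3)$ via the intermediate $3')$ are exactly the standard route and go through as written. One cosmetic point: in the induction $2)\Rightarrow 3')$ the base case $i=0$ is not literally a tautology---$3')$ at $i=0$ reads $Y_0\leftrightarrow X_0\leftrightarrow(X_1,\ldots,X_n)$, which is $2)$ at $i=0$ once $\sigma\{Y^{-1}\}=\{\emptyset,\Omega\}$ is used to drop $Y^{-1}$ from the conditioning; you clearly have this in mind, but the word ``tautology'' slightly undersells that $2)$ is being invoked.
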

%{\em Proof.} Omitted due to space limitation.~~~~~~~~~~~~~~~~~~~~~~~~~~~~~~~~~~~~~~~~~~~~~~~~\qquad\endproof\\
According to Lemma~\ref{lem1}, for causally restricted kernels
\begin{eqnarray}
I(X^n;Y^n)&=&\int_{{\cal X}_{0,n} \times {\cal Y}_{0,n}} \log \Big( \frac{ \overrightarrow{q}_{0,n}(d y^n; x^n)}{\nu_{0,n}(dy^n)} \Big){\overrightarrow q}_{0,n}(dy^n;dx^n)\otimes\mu_{0,n}(dx^n) \nonumber  \\
&\equiv&{\mathbb I}(\mu_{0,n},\overrightarrow{q}_{0,n} )  \label{ex11}
\end{eqnarray}
where (\ref{ex11}) states that $I(X^n;Y^n)$ is a functional of $\{\mu_{0,n},{\overrightarrow q}_{0,n}\}$.
Hence, causal RDF is defined by optimizing ${\mathbb I}(\mu_{0,n},{q}_{0,n})$ over ${q}_{0,n}{\in}Q_{0,n}(D)$ where $Q_{0,n}(D)=\{ q_{0,n} \in {\cal Q}({\cal Y}_{0,n}; {\cal X}_{0,n}):\int_{{\cal X}_{0,n}}\int_{ {\cal Y}_{0,n}} d_{0,n}(x^n,y^n) q_{0,n}(dy^n;x^n)\otimes\mu_{0,n}(dx^n) \leq D\}$ subject to the realizability constraint $q_{0,n}(dy^n;x^n)={\overrightarrow q}_{0,n}(dy^n;x^n)-a.s.,$ which satisfies a distortion constraint, or via (\ref{ex11}).
\begin{definition}\label{def1}
$($Causal Rate Distortion Function$)$
Suppose $d_{0,n}(x^n,y^n)\triangleq\sum^n_{i=0}\rho_{0,i}(x^i,y^i)$, where $\rho_{0,i}: {\cal X}_{0,i}  \times {\cal Y}_{0,i}\rightarrow [0, \infty)$, is a sequence of ${\cal B}({\cal X}_{0,i}) \times {\cal B }( {\cal Y}_{0,i})$-measurable distortion functions, and let $\overrightarrow{Q}_{0,n}(D)$ (assuming is non-empty) denotes the average distortion or fidelity constraint defined by
\begin{eqnarray}
\overrightarrow{Q}_{0,n}(D)\triangleq Q_{0,n}(D)\bigcap{\overrightarrow Q}_{ad},~D\geq0\nonumber
\end{eqnarray}\\
The causal RDF associated with the causally restricted kernel is defined by
\begin{eqnarray}
{R}^c_{0,n}(D) \triangleq  \inf_{{{q}_{0,n}\in \overrightarrow{Q}_{0,n}(D)}}{\mathbb I}(\mu_{0,n},{q}_{0,n})\label{ex12}
\end{eqnarray}
\end{definition}
\section{Existence of Optimal Causal Reconstruction Kernel}\label{existence}
\par In this section, appropriate topologies and function spaces are introduced and existence of the minimizing causal product kernel in $(\ref{ex12})$ is shown. %In the process we also show existence for $R_{0,n}(D)$.
\subsection{Abstract Spaces}
Let $BC({\cal Y}_{0,n})$ denote the vector space of bounded continuous real valued functions defined
on the Polish space ${\cal Y}_{0,n}$. Furnished with the sup norm
topology, this is a Banach space. The topological dual of $BC({\cal Y}_{0,n})$ denoted by $ \Big( BC({\cal Y}_{0,n})\Big)^*$ is isometrically isomorphic to the Banach space of finitely additive regular bounded signed measures on ${\cal Y}_{0,n}$ \cite{dunford1988}, denoted by $M_{rba}({\cal Y}_{0,n})$. Let $\Pi_{rba}({\cal Y}_{0,n})\subset M_{rba}({\cal Y}_{0,n})$ denote the set of regular bounded
finitely additive probability measures on ${\cal Y}_{0,n}$.  Clearly if ${\cal Y}_{0,n}$ is compact,
then $\Big(BC({\cal Y}_{0,n})\Big)^*$ will be isometrically isomorphic to the space of countably additive signed measures, as
in \cite{csiszar74}. Denote by $L_1(\mu_{0,n}, BC({\cal Y}_{0,n}))$ the space of all $\mu_{0,n}$-integrable functions defined on  ${\cal X}_{0,n}$ with values in $BC({\cal Y}_{0,n}),$ so that for each $\phi \in L_1(\mu_{0,n}, BC({\cal Y}_{0,n}))$ its norm is defined by
\begin{eqnarray}
\parallel \phi \parallel_{\mu_{0,n}} \triangleq \int_{{\cal X}_{0,n}} ||\phi(x^n,\cdot)||_{BC({\cal Y}_{0,n})} \mu_{0,n}(dx^n) <\infty\nonumber
\end{eqnarray}
The norm topology $\parallel{\phi}\parallel_{\mu_{0,n}}$, makes $L_1(\mu_{0,n}, BC({\cal Y}_{0,n}))$ a Banach
space, and it follows from the theory of ``lifting" \cite{tulcea1969} that the dual
of this space is $L_{\infty}^w(\mu_{0,n}, M_{rba}({\cal Y}_{0,n}))$, denoting
the space of all $M_{rba}({\cal Y}_{0,n})$ valued functions $\{q\}$ which
are weak$^*$-measurable in the sense that for each $\phi \in
BC({\cal Y}_{0,n}),$  $x^n \rightarrow q_{x^n}(\phi) \triangleq \int_{{\cal Y}_{0,n}}\phi(y^n) q(dy^n;x^n)$ is $\mu_{0,n}$-measurable and $\mu_{0,n}$-essentially
bounded.
\subsection{Weak$^*$-Compactness and Existence}
Define an admissible set of stochastic kernels associated with classical RDF by
\begin{eqnarray}
Q_{ad}\triangleq L_{\infty}^w(\mu_{0,n}, \Pi_{rba}({\cal Y}_{0,n})) \subset L_{\infty}^w(\mu_{0,n}, M_{rba}({\cal Y}_{0,n}))\nonumber
\end{eqnarray}
Clearly, $Q_{ad}$ is a unit sphere in $L_{\infty}^w(\mu_{0,n}, M_{rba}({\cal Y}_{0,n}))$. For each $\phi{\in}L_1(\mu_{0,n}, BC({\cal Y}_{0,n}))$ we can define a linear functional on $L_{\infty}^w(\mu_{0,n}, M_{rba}({\cal Y}_{0,n}))$ by
\begin{eqnarray}
\ell_{\phi}(q_{0,n})\triangleq\int_{{\cal X}_{0,n}}\Big( \int_{{\cal Y}_{0,n}} \phi(x^n,y^n)q_{0,n}(dy^n;x^n) \Big)\mu_{0,n}(dx^n)\nonumber
\end{eqnarray}
This is a bounded, linear and weak$^*$-continuous functional on $L_{\infty}^w(\mu_{0,n}, M_{rba}({\cal Y}_{0,n}))$.% as it is shown below.

For $d_{0,n}: {\cal X}_{0,n}  \times {\cal Y}_{0,n}\mapsto[0, \infty)$ measurable and $d_{0,n}{\in}L_1(\mu_{0,n},BC({\cal Y}_{0,n}))$ the distortion constraint set of the classical RDF is $Q_{0,n}(D)\triangleq\{q{\in}Q_{ad}:\ell_{d_{0,n}}(q_{0,n}){\leq}D\}$.
\begin{lemma}\label{$Q_ad$ w$^*$-closed}
For $\ell_{d_{0,n}}{\in}L_1(\mu_{0,n},BC({\cal Y}_{0,n}))$, the set $Q_{0,n}(D)$ is weak$^*$-bounded and weak$^*$-closed subset of $Q_{ad}$.
\end{lemma}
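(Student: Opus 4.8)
The plan is to exploit the decomposition $Q_{0,n}(D)=Q_{ad}\cap\{q_{0,n}:\ell_{d_{0,n}}(q_{0,n})\leq D\}$ inside the dual space $L_{\infty}^w(\mu_{0,n},M_{rba}({\cal Y}_{0,n}))$ and to treat weak$^*$-boundedness and the two factors of the intersection separately. Weak$^*$-boundedness is immediate: each $q_{0,n}\in Q_{ad}$ has sections of total variation $1$, hence $\|q_{0,n}\|\leq 1$, so $Q_{0,n}(D)\subseteq Q_{ad}$ lies in the closed unit ball $B$ of the dual; consequently $\sup_{q_{0,n}\in Q_{0,n}(D)}|\ell_{\phi}(q_{0,n})|\leq\|\phi\|_{\mu_{0,n}}$ for every $\phi\in L_1(\mu_{0,n},BC({\cal Y}_{0,n}))$, which is exactly weak$^*$-boundedness.

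For weak$^*$-closedness I would first dispose of the distortion factor. Since $d_{0,n}$ lies in the predual $L_1(\mu_{0,n},BC({\cal Y}_{0,n}))$, the functional $\ell_{d_{0,n}}$ is evaluation against a predual element, hence weak$^*$-continuous (it is precisely the type of functional already noted to be bounded, linear and weak$^*$-continuous), so $\{q_{0,n}:\ell_{d_{0,n}}(q_{0,n})\leq D\}$ is weak$^*$-closed, being the preimage of $(-\infty,D]$. As an intersection of weak$^*$-closed sets is weak$^*$-closed, it then remains only to show that $Q_{ad}=L_{\infty}^w(\mu_{0,n},\Pi_{rba}({\cal Y}_{0,n}))$ is weak$^*$-closed.

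For this I would represent $Q_{ad}=B\cap\bigcap_{A\in{\cal B}({\cal X}_{0,n})}H_A$, where $H_A\triangleq\{q_{0,n}:\ell_{\phi_A}(q_{0,n})=\mu_{0,n}(A)\}$ and $\phi_A(x^n,y^n)\triangleq\mathbf{1}_A(x^n)$; note $\phi_A\in L_1(\mu_{0,n},BC({\cal Y}_{0,n}))$ because the constant function $\mathbf{1}$ lies in $BC({\cal Y}_{0,n})$ for every Polish ${\cal Y}_{0,n}$. The ball $B$ is weak$^*$-closed by Banach--Alaoglu and each $H_A$ is a weak$^*$-closed hyperplane, so the right-hand side is weak$^*$-closed; it only has to be identified with $Q_{ad}$. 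The inclusion $\subseteq$ is clear. For $\supseteq$, take $q_{0,n}$ in the right-hand side: weak$^*$-measurability gives a measurable version of $x^n\mapsto q_{0,n}(\cdot;x^n)$, membership in $B$ gives $\|q_{0,n}(\cdot;x^n)\|\leq 1$ $\mu_{0,n}$-a.e., and $\int_A q_{0,n}({\cal Y}_{0,n};x^n)\,\mu_{0,n}(dx^n)=\mu_{0,n}(A)$ for all $A$ forces $q_{0,n}({\cal Y}_{0,n};x^n)=1$ $\mu_{0,n}$-a.e. by the usual a.e.\ uniqueness argument. The elementary point is that a bounded finitely additive signed measure $m$ with $m({\cal Y}_{0,n})=1$ and $\|m\|=m^{+}({\cal Y}_{0,n})+m^{-}({\cal Y}_{0,n})\leq 1$ must have $m^{-}\equiv 0$, i.e.\ $m\geq 0$; applied to $m=q_{0,n}(\cdot;x^n)$ this gives $q_{0,n}(\cdot;x^n)\in\Pi_{rba}({\cal Y}_{0,n})$ a.e., so $q_{0,n}\in Q_{ad}$.

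The hard part is not any single computation above but recognizing the pitfall this representation avoids: since ${\cal Y}_{0,n}$ need not be compact, $\big(BC({\cal Y}_{0,n})\big)^*=M_{rba}({\cal Y}_{0,n})$ consists of merely finitely additive measures and $BC({\cal Y}_{0,n})$ is non-separable, so attempting to enforce positivity of the limiting sections directly, via constraints $\int_A\Big(\int\phi\,dq_{0,n}(\cdot;x^n)\Big)\mu_{0,n}(dx^n)\geq 0$ over all $0\leq\phi\in BC({\cal Y}_{0,n})$ and all $A$, would require passing from an $A$- and $\phi$-dependent $\mu_{0,n}$-null set to a single null set, which a countability argument cannot supply. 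The device above circumvents this entirely: positivity of the sections of the weak$^*$-limit is not imposed by extra test functions but falls out of the norm bound together with the normalization, using only the total-variation identity for bounded finitely additive measures. The remaining verifications — weak$^*$-continuity of $\ell_{\phi_A}$ and $\ell_{d_{0,n}}$, the a.e.\ uniqueness step, the Hahn--Jordan decomposition of bounded finitely additive measures, and the harmlessness of modifying $q_{0,n}$ on a $\mu_{0,n}$-null set within $L_{\infty}^w$ — are routine and rely only on facts recalled in the excerpt or in standard references such as \cite{dunford1988}.
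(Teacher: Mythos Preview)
The paper explicitly omits all proofs (``Throughout the manuscript proofs are omitted due to space limitation''), so there is no detailed argument to compare against. The only indication of the intended route is the sentence immediately following the lemma: ``Hence $Q_{0,n}(D)$ is weak$^*$-compact (compactness of $Q_{ad}$ follows from Alaoglu's Theorem~\cite{dunford1988})''. This tells us the authors have in mind (i) $Q_{ad}$ weak$^*$-closed in the dual, so that Alaoglu yields its compactness, and (ii) $Q_{0,n}(D)$ cut out of $Q_{ad}$ by the weak$^*$-continuous functional $\ell_{d_{0,n}}$. Your proof supplies precisely these two ingredients, so it is fully aligned with what the paper sketches.

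Your argument is correct. Weak$^*$-boundedness is immediate from $Q_{0,n}(D)\subset Q_{ad}\subset B$. For closedness, the distortion half-space $\{\ell_{d_{0,n}}\leq D\}$ is weak$^*$-closed because $d_{0,n}$ lies in the predual, exactly as the paper already records for general $\ell_\phi$. The substantive step is the weak$^*$-closedness of $Q_{ad}$, and your representation $Q_{ad}=B\cap\bigcap_{A\in{\cal B}({\cal X}_{0,n})}H_A$ with $H_A=\{\ell_{\phi_A}=\mu_{0,n}(A)\}$, $\phi_A=\mathbf 1_A$, is a clean way to obtain it: $B$ is weak$^*$-closed by Banach--Alaoglu, each $H_A$ is a weak$^*$-closed hyperplane since $\phi_A\in L_1(\mu_{0,n},BC({\cal Y}_{0,n}))$, and arbitrary intersections of closed sets are closed. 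The identification $\supseteq$ via the Jordan decomposition of bounded finitely additive measures (valid in $M_{rba}$, cf.\ \cite{dunford1988}) is the key observation --- from $m^+({\cal Y}_{0,n})+m^-({\cal Y}_{0,n})\leq 1$ and $m^+({\cal Y}_{0,n})-m^-({\cal Y}_{0,n})=1$ one gets $m^-=0$ --- and it neatly sidesteps the uncountability issue you correctly identify. Nothing is missing.
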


Hence $Q_{0,n}(D)$ is weak$^*$-compact (compactness of $Q_{ad}$ follows from Alaoglu's Theorem~\cite {dunford1988}).
\begin{lemma}\label{weakstar-compact_classL1}
 Let  ${\cal X}_{0,n},{\cal Y}_{0,n}$ be two Polish spaces  and $d_{0,n} :{\cal X}_{0,n}\times{\cal Y}_{0,n}\mapsto[0,\infty]$, a measurable, nonnegative, extended real valued
function, such that for a fixed $x^n \in {\cal X}_{0,n}$, $y^n \rightarrow d(x^n,\cdot)$ is continuous on ${\cal Y}_{0,n}$, for $\mu_{0,n}$-almost all $x^n \in {\cal X}_{0,n}$, and $d_{0,n}\in L_1(\mu_{0,n}, BC({\cal Y}_{0,n}))$. For any $D \in [0,\infty)$, introduce the set
\begin{eqnarray}
{Q}_{0,n}(D)\triangleq\{ q_{0,n}\in {Q}_{ad} : \int_{{\cal X}_{0,n}} \biggl(\int_{{\cal Y}_{0,n}}d_{0,n}(x^n,y^n) {q}_{0,n}(dy^n;x^n)\biggr)\mu_{0,n}(dx^n)\leq D\}\nonumber
\end{eqnarray}
and  suppose it  is nonempty.\\
Then ${Q}_{0,n}(D)$  is a weak$^*$-closed subset of $Q_{ad}$ and hence weak$^*$-compact.
\end{lemma}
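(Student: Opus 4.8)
The plan is to show weak$^*$-closedness of ${Q}_{0,n}(D)$ and then invoke weak$^*$-compactness of $Q_{ad}$ (Alaoglu) to conclude. The essential difference from Lemma~\ref{$Q_ad$ w$^*$-closed} is that here $d_{0,n}$ is only an extended real valued function which need not belong to $L_1(\mu_{0,n}, BC({\cal Y}_{0,n}))$ as a \emph{bounded} element: continuity in $y^n$ and $\mu_{0,n}$-integrability of $x^n \mapsto \|d_{0,n}(x^n,\cdot)\|_{BC({\cal Y}_{0,n})}$ are still assumed, but the function may take the value $+\infty$, so the linear functional $\ell_{d_{0,n}}$ is no longer automatically weak$^*$-continuous on all of $L_\infty^w(\mu_{0,n}, M_{rba}({\cal Y}_{0,n}))$. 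First I would reduce to a countable description: let $\{d_{0,n}^{(k)}\}_{k\geq 1}$ be the truncations $d_{0,n}^{(k)}(x^n,y^n) \triangleq \min\{d_{0,n}(x^n,y^n), k\}$. Each $d_{0,n}^{(k)}$ is nonnegative, bounded by $k$, continuous in $y^n$ for $\mu_{0,n}$-a.e.\ $x^n$, and hence lies in $L_1(\mu_{0,n}, BC({\cal Y}_{0,n}))$, so by the argument of Lemma~\ref{$Q_ad$ w$^*$-closed} the associated functional $\ell_{d_{0,n}^{(k)}}$ is bounded, linear and weak$^*$-continuous on $L_\infty^w(\mu_{0,n}, M_{rba}({\cal Y}_{0,n}))$.

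Next I would take a net $\{q^{\alpha}_{0,n}\}$ in ${Q}_{0,n}(D)$ converging weak$^*$ to some $q_{0,n}\in Q_{ad}$ (the limit lies in $Q_{ad}$ because $Q_{ad}$ is weak$^*$-closed, being a unit sphere of probability-valued kernels — this is part of what Lemma~\ref{$Q_ad$ w$^*$-closed} and Alaoglu give). For every fixed $k$, weak$^*$-continuity of $\ell_{d_{0,n}^{(k)}}$ yields
\begin{equation}
\ell_{d_{0,n}^{(k)}}(q_{0,n}) = \lim_{\alpha}\ell_{d_{0,n}^{(k)}}(q^{\alpha}_{0,n}) \leq \limsup_{\alpha}\ell_{d_{0,n}}(q^{\alpha}_{0,n}) \leq D,\nonumber
\end{equation}
where the first inequality uses $d_{0,n}^{(k)}\leq d_{0,n}$ pointwise and nonnegativity of the kernels and of $\mu_{0,n}$. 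Thus $\ell_{d_{0,n}^{(k)}}(q_{0,n})\leq D$ for all $k$. Now let $k\uparrow\infty$: since $d_{0,n}^{(k)}\uparrow d_{0,n}$ pointwise and monotonically, the Monotone Convergence Theorem — applied first to the inner integral against $q_{0,n}(\cdot;x^n)$ for each fixed $x^n$, and then to the outer integral against $\mu_{0,n}(dx^n)$ — gives $\ell_{d_{0,n}^{(k)}}(q_{0,n})\uparrow \ell_{d_{0,n}}(q_{0,n})$, hence $\ell_{d_{0,n}}(q_{0,n})\leq D$, i.e.\ $q_{0,n}\in {Q}_{0,n}(D)$. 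This establishes weak$^*$-closedness; weak$^*$-compactness then follows since ${Q}_{0,n}(D)$ is a weak$^*$-closed subset of the weak$^*$-compact set $Q_{ad}$.

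I expect the main obstacle to be the careful handling of the two measure-theoretic points that make this more than a verbatim repeat of Lemma~\ref{$Q_ad$ w$^*$-closed}: first, justifying that $q_{0,n}(\cdot;x^n)$ is a genuine (possibly only finitely additive) probability measure for $\mu_{0,n}$-a.e.\ $x^n$ so that MCT applies to the inner integral — this is handled by the structure of $Q_{ad}=L_\infty^w(\mu_{0,n},\Pi_{rba}({\cal Y}_{0,n}))$ and the lifting theory of \cite{tulcea1969} already invoked in the excerpt; and second, the interchange of the limit in $\alpha$ with the truncation, which is why I introduced the truncations $d_{0,n}^{(k)}$ rather than attempting to use $\ell_{d_{0,n}}$ directly — the latter is only lower semicontinuous in the weak$^*$ topology, not continuous, and lower semicontinuity is exactly what the truncation argument extracts. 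If one prefers, the whole argument can be phrased as: $\ell_{d_{0,n}} = \sup_k \ell_{d_{0,n}^{(k)}}$ is a supremum of weak$^*$-continuous functionals, hence weak$^*$-lower semicontinuous, and sublevel sets of lower semicontinuous functions are closed.
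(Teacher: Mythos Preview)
The paper omits the proof (it states ``proofs are omitted due to space limitation''), but the intended argument is transparent from the surrounding text: since $d_{0,n}\in L_1(\mu_{0,n},BC({\cal Y}_{0,n}))$ is assumed explicitly, the functional $\ell_{d_{0,n}}$ is bounded, linear and weak$^*$-continuous on $L_\infty^w(\mu_{0,n},M_{rba}({\cal Y}_{0,n}))$ (this is stated verbatim just before Lemma~\ref{$Q_ad$ w$^*$-closed}), so $Q_{0,n}(D)=\ell_{d_{0,n}}^{-1}((-\infty,D])\cap Q_{ad}$ is weak$^*$-closed directly; compactness then follows from Alaoglu applied to $Q_{ad}$, exactly as the paper remarks immediately after Lemma~\ref{$Q_ad$ w$^*$-closed}. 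In other words, Lemma~\ref{weakstar-compact_classL1} is essentially a restatement of Lemma~\ref{$Q_ad$ w$^*$-closed} together with the Alaoglu remark.

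Your argument reaches the right conclusion but is built on a misreading of the hypothesis. You claim that ``$d_{0,n}$ \ldots\ need not belong to $L_1(\mu_{0,n},BC({\cal Y}_{0,n}))$ as a \emph{bounded} element'' and that ``the function may take the value $+\infty$'', but the lemma explicitly assumes $d_{0,n}\in L_1(\mu_{0,n},BC({\cal Y}_{0,n}))$; membership in this Bochner space forces $d_{0,n}(x^n,\cdot)\in BC({\cal Y}_{0,n})$, hence finite and bounded, for $\mu_{0,n}$-a.e.\ $x^n$. The codomain $[0,\infty]$ is therefore immaterial on a full-measure set, $\ell_{d_{0,n}}$ is already weak$^*$-continuous, and no truncation is needed. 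Your truncation-plus-MCT route still arrives at the result, but only accidentally: for a.e.\ $x^n$ the truncation $d_{0,n}^{(k)}(x^n,\cdot)$ coincides with $d_{0,n}(x^n,\cdot)$ once $k\geq\|d_{0,n}(x^n,\cdot)\|_{BC({\cal Y}_{0,n})}$, so the inner limit is eventually stationary. This matters because your stated justification---applying the Monotone Convergence Theorem to the inner integral against $q_{0,n}(\cdot;x^n)\in\Pi_{rba}({\cal Y}_{0,n})$---is not valid in general: MCT fails for finitely additive measures. Had the hypothesis genuinely allowed unbounded fibres $d_{0,n}(x^n,\cdot)$, that step would be a real gap; under the actual hypothesis you are rescued only because the monotone limit stabilises in finitely many steps.
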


Next, we define the realizability constraint via causally restricted kernels as follows
\begin{eqnarray}
{\overrightarrow Q}_{ad}=\Big\{q_{0,n}\in {Q_{ad}}:q_{0,n}(dy^n;x^n)={\overrightarrow q}_{0,n}(dy^n;x^n)-a.s.\Big\}\nonumber
%\otimes_{i=1}^n {q}_i(dy_i;y^{i-1},x^i)- a.s.\Big\} \nonumber
\end{eqnarray}
which satisfy an average distortion function as follows:
\begin{eqnarray}
{\overrightarrow Q_{{0,n}}(D)}&\triangleq&Q_{0,n}(D)\bigcap{\overrightarrow Q}_{ad}\nonumber\\
&=&\Big\{{q}_{0,n} \in {\overrightarrow Q}_{ad} :\ell_{d_{0,n}}({\overrightarrow q}_{0,n})\triangleq \int_{{\cal X}_{0,n}} \biggr(\int_{{\cal Y}_{0,n}}d_{0,n}(x^n,y^n){\overrightarrow q}_{0,n}(dy^n;x^n) \biggr)\nonumber\\
&\otimes&\mu_{0,n}(dx^n)\Big\}\nonumber
\end{eqnarray}
The following is assumed.
\begin{assumption}\label{weakstar-closed}
Let ${\cal X}_{0,n}$ and ${\cal Y}_{0,n}$ be Polish spaces and $\overrightarrow{Q}_{ad}$  weak$^*$-closed.%as a subset of a weak$^*$-compact set $Q_{ad}$ it is weak$^*$-compact.
\end{assumption}
\begin{remark}
The conditions 1) ${\cal Y}_{0,n}$ is a compact Polish space, and 2)  for all $h(\cdot){\in}BC({\cal Y}_{n})$, the function $(x^{n},y^{n-1})\in{\cal X}_{0,n}\times{\cal Y}_{0,n-1}\mapsto\int_{{\cal Y}_n}h(y)q_n(dy;y^{n-1},x^n)\in\mathbb{R}$
is continuous jointly in the variables $(x^{n},y^{n-1})\in{\cal X}_{0,n}\times{\cal Y}_{0,n-1}$ are sufficient for ${\overrightarrow{Q}}_{ad}$ to be weak$^*$-closed.
\end{remark}
\begin{theorem}\label{weakstar-compact_2}
Suppose Assumption~\ref{weakstar-closed} and the conditions of Lemma~\ref{weakstar-compact_classL1} hold. For any $D \in [0,\infty)$, introduce the set
\begin{eqnarray}
{Q}_{0,n}(D)\triangleq\{ q_{0,n}\in {\overrightarrow Q}_{ad} : \int_{{\cal X}_{0,n}} \biggl(\int_{{\cal Y}_{0,n}}d(x^n,y^n) {\overrightarrow q}_{0,n}(dy^n;x^n)\biggr)\mu_{0,n}(dx^n)\leq D\}\nonumber
\end{eqnarray}
and  suppose it  is nonempty.\\
Then ${\overrightarrow Q}_{0,n}(D)$  is a weak$^*$-closed subset of ${\overrightarrow Q}_{ad}$ and hence weak$^*$-compact.
\end{theorem}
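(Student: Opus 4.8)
The plan is to prove nothing genuinely new, but rather to assemble three facts already in place: the weak$^*$-compactness of the ambient ball $Q_{ad}$, the weak$^*$-closedness of the distortion sublevel set $Q_{0,n}(D)$ from Lemma~\ref{weakstar-compact_classL1}, and the weak$^*$-closedness of the realizability set $\overrightarrow{Q}_{ad}$ from Assumption~\ref{weakstar-closed}. First I would record the identity $\overrightarrow{Q}_{0,n}(D)=Q_{0,n}(D)\cap\overrightarrow{Q}_{ad}$, which is just Definition~\ref{def1} together with the observation that for any $q_{0,n}\in\overrightarrow{Q}_{ad}$ one has $q_{0,n}(dy^n;x^n)=\overrightarrow{q}_{0,n}(dy^n;x^n)$ $\mu_{0,n}$-a.s., so $\ell_{d_{0,n}}(q_{0,n})=\ell_{d_{0,n}}(\overrightarrow{q}_{0,n})$ and the distortion constraint defining $Q_{0,n}(D)$ coincides on this set with the one defining $\overrightarrow{Q}_{0,n}(D)$.

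Next I would note that $Q_{ad}$ is weak$^*$-compact: it lies inside the closed unit ball of the dual $\big(L_1(\mu_{0,n},BC({\cal Y}_{0,n}))\big)^{*}=L_{\infty}^{w}(\mu_{0,n},M_{rba}({\cal Y}_{0,n}))$, which is weak$^*$-compact by Alaoglu's theorem, and $Q_{ad}$ is itself weak$^*$-closed there because the defining conditions $q_{x^n}(\phi)\geq 0$ for every nonnegative $\phi\in BC({\cal Y}_{0,n})$ and $q_{x^n}(\mathbf 1)=1$ are each preserved under weak$^*$-limits (they are limits of the evaluation maps $q\mapsto\ell_{\phi}(q)$, which are weak$^*$-continuous by construction of the topology). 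Then the conclusion follows by soft topology: $Q_{0,n}(D)$ is weak$^*$-closed in $Q_{ad}$ by Lemma~\ref{weakstar-compact_classL1}, $\overrightarrow{Q}_{ad}$ is weak$^*$-closed by Assumption~\ref{weakstar-closed}, hence their intersection $\overrightarrow{Q}_{0,n}(D)$ is weak$^*$-closed; being a weak$^*$-closed subset of the weak$^*$-compact set $Q_{ad}$, it is weak$^*$-compact. The nonemptiness hypothesis only ensures the statement is not about the empty set.

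The genuinely nontrivial ingredient is thus pushed into Assumption~\ref{weakstar-closed}, i.e.\ the weak$^*$-closedness of the causal/realizability constraint $\overrightarrow{Q}_{ad}$, and this is where I expect the real work to be if one wants a self-contained argument rather than an assumption. By the Remark this reduces to compactness of ${\cal Y}_{0,n}$ together with joint continuity of each one-step kernel $(x^{n},y^{n-1})\mapsto\int_{{\cal Y}_n}h(y)\,q_n(dy;y^{n-1},x^n)$; one would then have to show that a weak$^*$-limit of kernels, each disintegrating as $\otimes_{i=0}^{n}q_i(dy_i;y^{i-1},x^i)$, again admits such a causal disintegration. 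That step uses the lifting/Tulcea machinery already cited, extracting the limiting one-step kernels from the limiting joint kernel and checking, via the assumed continuity, that the finite-dimensional marginals converge compatibly with the product (causal) structure; everything outside that step is routine functional analysis.
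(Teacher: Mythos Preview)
Your proposal is correct and is exactly the argument the paper sets up: the proof of this theorem is omitted in the paper, but the preceding Lemma~\ref{$Q_ad$ w$^*$-closed}, Lemma~\ref{weakstar-compact_classL1}, Assumption~\ref{weakstar-closed}, and the line invoking Alaoglu's theorem are arranged precisely so that $\overrightarrow{Q}_{0,n}(D)=Q_{0,n}(D)\cap\overrightarrow{Q}_{ad}$ is an intersection of weak$^*$-closed sets inside the weak$^*$-compact $Q_{ad}$. Your last paragraph, sketching how one might actually \emph{verify} Assumption~\ref{weakstar-closed}, goes beyond what the paper does (it only records sufficient conditions in the Remark), but the core proof coincides with the intended one.
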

\begin{theorem} \label{th3}
Under Theorem~\ref{weakstar-compact_2}, $R^c_{0,n}(D)$ has a minimum.
\end{theorem}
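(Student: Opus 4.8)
The plan is to argue by the direct method (a Weierstrass--type argument) on the weak$^*$-compact feasible set. By Theorem~\ref{weakstar-compact_2}, $\overrightarrow{Q}_{0,n}(D)$ is a nonempty, weak$^*$-closed, hence weak$^*$-compact subset of $\overrightarrow{Q}_{ad}\subset L_\infty^w(\mu_{0,n},M_{rba}({\cal Y}_{0,n}))$. Therefore it suffices to show that the objective $q_{0,n}\mapsto\mathbb{I}(\mu_{0,n},q_{0,n})$ is weak$^*$-lower semicontinuous on $\overrightarrow{Q}_{ad}$. Granting this, one picks a minimizing net $\{q^{(\alpha)}_{0,n}\}\subset\overrightarrow{Q}_{0,n}(D)$ with $\mathbb{I}(\mu_{0,n},q^{(\alpha)}_{0,n})\to R^c_{0,n}(D)$; by compactness a subnet converges weak$^*$ to some $q^*_{0,n}$, and by weak$^*$-closedness of $\overrightarrow{Q}_{0,n}(D)$ we have $q^*_{0,n}\in\overrightarrow{Q}_{0,n}(D)$. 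Lower semicontinuity then gives $\mathbb{I}(\mu_{0,n},q^*_{0,n})\le\liminf_\alpha\mathbb{I}(\mu_{0,n},q^{(\alpha)}_{0,n})=R^c_{0,n}(D)\le\mathbb{I}(\mu_{0,n},q^*_{0,n})$, so the infimum in (\ref{ex12}) is attained at $q^*_{0,n}$.

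The heart of the argument is the weak$^*$-lower semicontinuity of $q_{0,n}\mapsto\mathbb{I}(\mu_{0,n},q_{0,n})$. Using (\ref{re3}), I would write $\mathbb{I}(\mu_{0,n},q_{0,n})=\mathbb{D}(\mu_{0,n}\otimes q_{0,n}\,||\,\mu_{0,n}\times\nu_{0,n})$, where $\nu_{0,n}=\nu_{0,n}(q_{0,n})$ is the reconstruction marginal defined by (P2). First I would check that the two maps $q_{0,n}\mapsto\mu_{0,n}\otimes q_{0,n}$ and $q_{0,n}\mapsto\nu_{0,n}(q_{0,n})$ are continuous from the weak$^*$ topology on $L_\infty^w(\mu_{0,n},M_{rba}({\cal Y}_{0,n}))$ into the weak topologies induced by $BC({\cal X}_{0,n}\times{\cal Y}_{0,n})$ and $BC({\cal Y}_{0,n})$ respectively: for the joint map this follows by testing against integrands of product form $\phi(x^n)h(y^n)$ and, more generally, against any element of $L_1(\mu_{0,n},BC({\cal Y}_{0,n}))$ (which is exactly the definition of weak$^*$ convergence of $q_{0,n}$), and for the marginal map by taking $\phi\equiv1$ times $h\in BC({\cal Y}_{0,n})$; consequently $q_{0,n}\mapsto\mu_{0,n}\times\nu_{0,n}(q_{0,n})$ is weak$^*$-continuous as well. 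Then I would invoke the Gibbs (Donsker--Varadhan) variational representation
\begin{eqnarray}
\mathbb{D}(P\,||\,R) &=& \sup_{g\in BC({\cal X}_{0,n}\times{\cal Y}_{0,n})}\Big\{\int g\,dP-\log\int e^{g}\,dR\Big\},\nonumber
\end{eqnarray}
which exhibits $(P,R)\mapsto\mathbb{D}(P\,||\,R)$ as a pointwise supremum of functionals that are jointly continuous in $(P,R)$ for these weak topologies (note $e^{g}\in BC$ whenever $g\in BC$, so $R\mapsto\log\int e^{g}\,dR$ is weak$^*$-continuous). A supremum of continuous functionals is lower semicontinuous, and composing with the continuous maps $q_{0,n}\mapsto(\mu_{0,n}\otimes q_{0,n},\,\mu_{0,n}\times\nu_{0,n}(q_{0,n}))$ yields the desired weak$^*$-lower semicontinuity of $\mathbb{I}(\mu_{0,n},\cdot)$.

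I expect the principal obstacle to be the finitely additive setting: since ${\cal Y}_{0,n}$ need not be compact, $(BC({\cal Y}_{0,n}))^*=M_{rba}({\cal Y}_{0,n})$ consists of only finitely additive measures, so one must justify the variational representation and the joint lower semicontinuity of $\mathbb{D}$ within this class (restricting test functions to $BC$ and exploiting that $\int e^{g}\,dR$ stays weak$^*$-continuous keeps everything well posed), and one must verify that forming the joint measure $\mu_{0,n}\otimes q_{0,n}$ and the marginal $\nu_{0,n}$ is compatible with weak$^*$-convergence of $q_{0,n}$ via the lifting-theoretic duality recalled in Section~\ref{existence}. As already reflected above, the weak$^*$ topology need not be metrizable, so one works with nets rather than sequences; since lower semicontinuity is a statement about nets this causes no real difficulty. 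With these technical points in place, the existence of a minimizer follows from the argument of the first paragraph.
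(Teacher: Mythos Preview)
Your proposal is correct and follows exactly the paper's own approach: weak$^*$-compactness of the feasible set from Theorem~\ref{weakstar-compact_2} together with weak$^*$-lower semicontinuity of $q_{0,n}\mapsto\mathbb{I}(\mu_{0,n},q_{0,n})$ (for fixed $\mu_{0,n}$), combined via the direct Weierstrass argument. The paper simply asserts the lower semicontinuity without justification, whereas you supply a Donsker--Varadhan route to it; the overall strategy is the same.
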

\begin{proof}
 Follows from weak$^*$-compactness of $\overrightarrow{Q}_{ad}$ and lower semicontinuity of $\mathbb{I}(\mu_{0,n},q_{0,n})$ with respect to $q_{0,n}$ for a fixed $\mu_{0,n}$.\qquad
\end{proof}

\section{Necessary Conditions of Optimality of Causal Rate Distortion Function}\label{necessary}

In this section the form of the optimal causal product reconstruction kernels is derived under a stationarity assumption. The method is based on calculus of variations on the space of measures \cite{dluenberger69}.
\begin{assumption}\label{stationarity}
The family of measures $\overrightarrow{q}_{0,n}(dy^n;x^n)=\otimes^n_{i=0}q_i(dy_i;y^{i-1},x^i)-a.s.$, is the convolution of stationary conditional distributions.
\end{assumption}

Assumption~\ref{stationarity} holds for stationary process $\{(X_i,Y_i):i\in\mathbb{N}\}$ and $\rho_{0,i}(x^i,y^i)\equiv\rho(T^i{x^n},T^i{y^n})$, where $T^i{x^n}$ is the shift operator on $x^n$.  Utilizing Assumption~\ref{stationarity}, which holds for stationary processes  and a single letter distortion function, the Gateaux differential of $\mathbb{I}(\mu_{0,n},{q}_{0,n})$ is done in only one direction $\big{(}$since $q_i(dy_i;y^{i-1},x^i)$ are stationary$\big{)}$.% This simplifies the calculations of Gateaux derivative of
\par The constrained problem defined by (\ref{ex12}) can be reformulated using Lagrange multipliers as follows (equivalence of constrained and unconstrained problems follows similarly as in \cite{dluenberger69}).
\begin{equation}
{R}_{0,n}^c(D) = \inf_{{q}_{0,n} \in {\overrightarrow Q}_{ad}} \Big\{{{\mathbb I}}(\mu_{0,n},{q}_{0,n})-s(\ell_{{d}_{0,n}}({q}_{0,n})-D)\Big\} \label{ex13}
\end{equation}
and  $s \in(-\infty,0]$ is the Lagrange multiplier.\\
Note that ${\overrightarrow Q}_{ad}$ is a proper subset of the vector space $L_{\infty}^w(\mu_{0,n},M_{rba}({\cal Y}_{0,n}))$ which represent the realizability constraint. Therefore, one should introduce another set of Lagrange multipliers to obtain an optimization on the vector space $L_{\infty}^w(\mu_{0,n},M_{rba}({\cal Y}_{0,n}))$ without constraints.
\begin{theorem} \label{th6}
Suppose $d_{0,n}(x^n,y^n)=\sum_{i=0}^n\rho(T^i{x^n},T^i{y^n})$ and the Assumption~\ref{weakstar-closed} holds. The infimum in $(\ref{ex13})$ is attained at  ${q}^*_{0,n} \in
L_{\infty}^w(\mu_{0,n},{\Pi}_{rba}({\cal Y}_{0,n}))\cap{\overrightarrow Q}_{ad}$ given by
\begin{eqnarray}
{q}_{0,n}^*(dy^n;x^n)&=&\overrightarrow{q}^*_{0,n}(dy^n;x^n)-a.s.\nonumber\\
&=&\otimes_{i=0}^nq_i^*(dy_i;y^{i-1},x^i)-a.s\label{ex14}\\
&=&\otimes_{i=0}^n\frac{e^{s \rho(T^i{x^n},T^i{y^n})}\nu^*_i(dy_i;y^{i-1})}{\int_{{\cal Y}_i} e^{s \rho(T^i{x^n},T^i{y^n})} \nu^*_i(dy_i;y^{i-1})},~s\leq{0}\nonumber
\end{eqnarray}
and $\nu^*_i(dy_i;y^{i-1})\in {\cal Q}({\cal Y}_i;{\cal Y}_{0,{i-1}})$. The causal RDF is given by
\begin{eqnarray}
{R}_{0,n}^c(D)&&=sD -\sum_{i=0}^n\int_{{{\cal X}_{0,i}}\times{{\cal Y}_{0,i-1}}}\log \left( \int_{{\cal Y}_i} e^{s\rho(T^i{x^n},T^i{y^n})} \nu^*_i(dy_i;y^{i-1})\bigg)\right.\nonumber\\[-1.5ex]\label{ex15}\\[-1.5ex]
&&\quad\left.\times{{\overrightarrow q}^*_{0,i-1}}(dy^{i-1};x^{i-1})\otimes\mu_{0,i}(dx^i)\right.\nonumber
\end{eqnarray}
If ${R}_{0,n}^c(D) > 0$ then $ s < 0$  and
\begin{eqnarray}
%&&\int_{{\cal X}_{0,n}} \int_{{\cal Y}_{0,n}}
%d_{0,n}(x^n,y^n) {\overrightarrow q}^*_{0,n}(dy^n;x^n) \mu_{0,n}(dx^n) \\
\sum_{i=0}^n\int_{{\cal X}_{0,i}} \int_{{\cal Y}_{0,i}}
\rho(T^i{x^n},T^i{y^n}){\overrightarrow q}^*_{0,i}(dy^i;x^i) \mu_{0,i}(dx^i)=D\nonumber
\end{eqnarray}
\end{theorem}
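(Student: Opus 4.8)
\emph{Proposed proof.} The plan is to use calculus of variations on the space of (finitely additive) measures, working on the vector space $L_\infty^w(\mu_{0,n},M_{rba}({\cal Y}_{0,n}))$ and exploiting the decomposition of the causal mutual information into per-stage relative entropies afforded by Lemma~\ref{lem1}. First we would record, for $q_{0,n}={\overrightarrow q}_{0,n}=\otimes_{i=0}^n q_i$ with induced marginal $\nu_{0,n}=\otimes_{i=0}^n\nu_i$,
\begin{eqnarray}
{\mathbb I}(\mu_{0,n},{\overrightarrow q}_{0,n})&=&\sum_{i=0}^n\int_{{\cal X}_{0,i}\times{\cal Y}_{0,i-1}}{\mathbb D}\big(q_i(\cdot;y^{i-1},x^i)\,\|\,\nu_i(\cdot;y^{i-1})\big)\,{\overrightarrow q}_{0,i-1}(dy^{i-1};x^{i-1})\otimes\mu_{0,i}(dx^i),\nonumber
\end{eqnarray}
which holds because, by Lemma~\ref{lem1}, $I(X^n;Y^n)$ equals the directed information $\sum_i I(X^i;Y_i|Y^{i-1})$ and the chain $Y^{i-1}\leftrightarrow X^{i-1}\leftrightarrow X_i$ makes ${\overrightarrow q}_{0,i-1}\otimes\mu_{0,i}$ the joint law of $(X^i,Y^{i-1})$. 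This form is jointly convex in the factors and isolates the dependence of the functional on each $q_i$.

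Next we would pass to an unconstrained problem. Membership in ${\overrightarrow Q}_{ad}$ encodes that each $q_i(\cdot;y^{i-1},x^i)$ is a probability measure and that the kernel factorizes causally. Keeping the scalar multiplier $s\leq 0$ of $(\ref{ex13})$ for the fidelity constraint, we adjoin for each $i$ a ${\cal B}({\cal X}_{0,i}\times{\cal Y}_{0,i-1})$-measurable multiplier $\lambda_i(\cdot)$ for the normalization $q_i({\cal Y}_i;y^{i-1},x^i)=1$, and form
\begin{eqnarray}
{\cal L}(q_{0,n})&=&{\mathbb I}(\mu_{0,n},q_{0,n})-s\big(\ell_{d_{0,n}}(q_{0,n})-D\big)\nonumber\\
&&\quad+\sum_{i=0}^n\int\lambda_i(y^{i-1},x^i)\big(q_i({\cal Y}_i;y^{i-1},x^i)-1\big)\,{\overrightarrow q}_{0,i-1}(dy^{i-1};x^{i-1})\otimes\mu_{0,i}(dx^i),\nonumber
\end{eqnarray}
to be optimized over an open neighbourhood in $L_\infty^w(\mu_{0,n},M_{rba}({\cal Y}_{0,n}))$. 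Under Assumption~\ref{stationarity} all the $q_i$ share one stationary shape, so the Gateaux differential of ${\cal L}$ at a candidate $q^*_{0,n}$ is computed along a single admissible direction ${\overline q}_{0,n}-q^*_{0,n}$; localizing that direction in $(i,y^{i-1},x^i)$ still recovers the full family of pointwise conditions. Using the standard cancellation that the first variation of the terms containing $\log\nu_i$ integrates out to zero (perturbations of the induced marginals preserve total mass), the Euler--Lagrange condition is
\begin{eqnarray}
\log\frac{q^*_i(dy_i;y^{i-1},x^i)}{\nu^*_i(dy_i;y^{i-1})}-s\,\rho(T^ix^n,T^iy^n)-\lambda_i(y^{i-1},x^i)&=&0,\nonumber
\end{eqnarray}
${\overrightarrow q}^*_{0,i-1}\otimes\mu_{0,i}$-a.e., for $i=0,1,\ldots,n$.

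Exponentiating and imposing $q^*_i({\cal Y}_i;y^{i-1},x^i)=1$ solves for the normalization, $e^{-\lambda_i(y^{i-1},x^i)}=\int_{{\cal Y}_i}e^{s\rho(T^ix^n,T^iy^n)}\nu^*_i(dy_i;y^{i-1})$, which is exactly $(\ref{ex14})$; here $\nu^*_i$ is the $Y_i|Y^{i-1}$ conditional induced by $q^*_{0,n}$ itself, so $(\ref{ex14})$ is a self-consistent fixed point whose solvability is guaranteed by Theorem~\ref{th3}. That this critical point is the global minimizer follows from convexity of ${\mathbb I}(\mu_{0,n},\cdot)$, convexity of $-s\,\ell_{d_{0,n}}(\cdot)$ for $s\leq 0$, and affinity of the remaining terms along the one-directional perturbation. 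Substituting the Euler--Lagrange relation into ${\mathbb D}(q^*_i\|\nu^*_i)=s\int_{{\cal Y}_i}\rho\,q^*_i+\lambda_i$, summing over $i$, integrating against ${\overrightarrow q}^*_{0,i-1}\otimes\mu_{0,i}$ and inserting the value of $\lambda_i$ gives ${\mathbb I}(\mu_{0,n},q^*_{0,n})=s\,\ell_{d_{0,n}}(q^*_{0,n})-\sum_i\int\log\big(\int_{{\cal Y}_i}e^{s\rho}\nu^*_i\big){\overrightarrow q}^*_{0,i-1}\otimes\mu_{0,i}$, whence $R^c_{0,n}(D)={\cal L}(q^*_{0,n})=sD-\sum_i\int\log\big(\int_{{\cal Y}_i}e^{s\rho}\nu^*_i\big){\overrightarrow q}^*_{0,i-1}\otimes\mu_{0,i}$, i.e.\ $(\ref{ex15})$. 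Finally, if $R^c_{0,n}(D)>0$ then $s\neq 0$ (with $s=0$ each integrand in $(\ref{ex15})$ is $\log\int_{{\cal Y}_i}\nu^*_i=0$, forcing $R^c_{0,n}(D)=0$), hence $s<0$; and with $s<0$ complementary slackness for the Lagrangian forces the fidelity constraint to be active, giving $\sum_i\int\int\rho(T^ix^n,T^iy^n){\overrightarrow q}^*_{0,i}(dy^i;x^i)\mu_{0,i}(dx^i)=D$.

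The step I expect to be the main obstacle is making the Gateaux differentiation rigorous within the finitely-additive/weak$^*$ ``lifting'' framework: interchanging differentiation with the $\mu_{0,n}$-integration, handling the implicit dependence of each induced marginal $\nu^*_i$ on $q^*_{0,n}$, and verifying that the single stationary-direction perturbation allowed by Assumption~\ref{stationarity}, once localized, is rich enough to pin down the pointwise conditions for every $i$ and every $(y^{i-1},x^i)$. A secondary delicate point is establishing sufficiency of the stationary point given the multilinear product structure of ${\overrightarrow q}_{0,n}$ in its factors.
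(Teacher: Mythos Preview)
Your proposal is correct and follows precisely the approach the paper outlines in Section~\ref{necessary}: calculus of variations on the space of measures in the sense of \cite{dluenberger69}, a scalar Lagrange multiplier $s\leq 0$ for the fidelity constraint together with additional multipliers to lift the realizability/normalization constraints onto the vector space $L_\infty^w(\mu_{0,n},M_{rba}({\cal Y}_{0,n}))$, and a single-direction Gateaux variation justified by Assumption~\ref{stationarity}. The paper itself omits the full proof (stating that ``proofs are omitted due to space limitation''), so only this methodological sketch is available for comparison, and your outline matches it closely---including your correctly flagged technical obstacles, which are exactly the points the paper's sketch leaves open.
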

\\
%{\em Proof.} The fully unconstrained problem of (\ref{ex13}) is obtained by introducing another set of Lagrange multipliers. Using this and Theorem~\ref{th5} we obtain (\ref{ex14}) and (\ref{ex15}).\qquad  \qed
\begin{remark}
Note that if the distortion function satisfies $\rho(T^i{x^n},T^i{y^n})=\rho(x_i,T^i{y^n})$ then ${q}^*_{i}(dy_i;y^{i-1},x^i)=q_i^*(dy_i;y^{i-1},x_i)-a.s.,~i\in{\mathbb{N}^n}$, that is, the reconstruction kernel is Markov in $X^n$.
\end{remark}

\section{Realization of Causal Rate Distortion Function}\label{realization1}

 Fig.~\ref{realization} illustrates a cascade of sub-systems which realizes the causal RDF. This is called source-channel matching in information theory \cite{gastpar2003}.  It is also described in \cite{charalambous2008} and \cite{tatikonda2000} and is essential in control applications since this technique allows us to design encoding/decoding schemes without delays.
\begin{figure}[ht]
\centering
\includegraphics[scale=0.60]{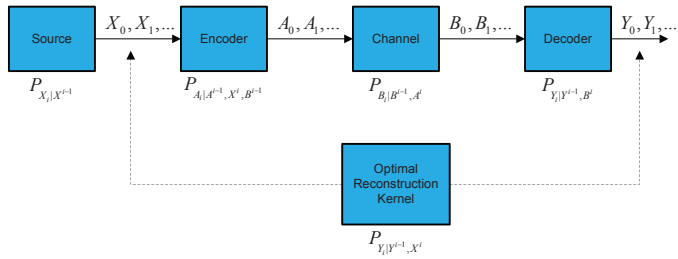}
\caption{Block Diagram of Realizable Causal Rate Distortion Function}
\label{realization}
\end{figure}

Examples to illustrate the concepts can be found in \cite{charalambous2008,stavrou-charalambous-kourtellaris2012b}.

\bibliographystyle{siam}
\bibliography{photis_references}

\end{document}